\newtheorem{Thm}{Theorem}
\newcommand{\abs}[1]{\left|#1\right|}
\newcommand{\cdf}{{\boldsymbol{\mathcal{D}}}}
\newcommand{\covd}{\mathcal{D}}
\newcommand{\df}{{\bf{d}}}
\newcommand{\fy}{\centernot}
\newcommand{\ga}{\gamma}
\newcommand{\J}{\mathscr{J}}
\newcommand{\Lag}{\mathscr{L}}
\newcommand{\Ri}{\mathcal{R}}
\newcommand{\tor}{\mathcal{T}}
\newcommand{\w}{{\scriptstyle\wedge}}
\newcommand{\bps}{\ensuremath{\bar{\psi}}}
\newcommand{\Bps}{\ensuremath{\bar{\Psi}}}
\newcommand\VI[2]{\hat{e}^{\hat{#1}}_{\hat{#2}}}
\newcommand\VIF[1]{\hat{\bf{e}}^{\hat{#1}}}
\newcommand\VIN[2]{\hat{E}^{\hat{#1}}_{\hat{#2}}}
\newcommand\hvif[1]{\hat{\bf{e}}^{{#1}}}
\newcommand\GAM[1]{{\gamma}^{\hat{#1}}}
\newcommand\gam[1]{\gamma^{{#1}}}
\newcommand\PA[1]{\partial_{\hat{#1}}}
\newcommand\SPI[1]{\hat{\omega}_{\hat{#1}}}
\newcommand\SPIF[2]{\hat{\boldsymbol{\omega}}^{\hat{#1}}{}_{\hat{#2}}}
\newcommand\hspif[2]{\hat{\boldsymbol{\omega}}^{{#1}}{}_{{#2}}}
\newcommand{\RIF}[2]{\hat{\boldsymbol{\mathcal{R}}}^{\hat{#1}}{}_{\hat{#2}}}
\newcommand{\hRif}[2]{\hat{\boldsymbol{\mathcal{R}}}^{{#1}}{}_{{#2}}}
\newcommand{\TF}[1]{\hat{\boldsymbol{\mathcal{T}}}^{\hat{#1}}}
\newcommand{\hcont}[3]{\hat{\mathcal{K}}_{#1}{}^{#2}{}_{#3}}
\newcommand{\CONTF}[2]{\hat{\boldsymbol{\mathcal{K}}}^{\hat{#1}}{}_{\hat{#2}}}
\newcommand{\vev}[1]{\ensuremath{\left<#1\right>}}
\newcommand{\beq}{\begin{equation}}
\newcommand{\eeq}{\end{equation}}
\newcommand{\ber}{\begin{eqnarray}}
\newcommand{\eer}{\end{eqnarray}}
\renewcommand{\(}{\left(}
\renewcommand{\)}{\right)}
\renewcommand{\[}{\left[}
\renewcommand{\]}{\right]}
\newcommand*{\de}[1][]{\mathop{\mathrm{d}#1}\nolimits}
\begin{document}

\title{Fermion masses through  condensation in spacetimes with torsion}

\author{Oscar \surname{Castillo-Felisola}}
\email[Corresponding Author: ]{o.castillo.felisola@gmail.com}
\affiliation{Centro Cient\'\i fico Tecnol\'ogico de Valpara\'\i so, Chile.}
\affiliation{Departamento de F\'\i sica, Universidad T\'ecnica Federico Santa Mar\'\i a, Casilla 110-V, Valpara\'\i so, Chile.}

\author{Cristobal \surname{Corral}}
\email{cristobal.corral@postgrado.usm.cl}
\affiliation{Departamento de F\'\i sica, Universidad T\'ecnica Federico Santa Mar\'\i a, Casilla 110-V, Valpara\'\i so, Chile.}

\author{Cristian \surname{Villavicencio}}
\email{cristian.villavicencio@mail.udp.cl}
\affiliation{Instituto de Ciencias B\'asicas, Universidad Diego Portales, 
Casilla 298-V, Santiago, Chile.}

\author{Alfonso R. \surname{Zerwekh}}
\email{alfonso.zerwekh@usm.cl}
\affiliation{Centro Cient\'\i fico Tecnol\'ogico de Valpara\'\i so, Chile.}
\affiliation{Departamento de F\'\i sica, Universidad T\'ecnica Federico Santa Mar\'\i a, Casilla 110-V, Valpara\'\i so, Chile.}

\begin{abstract}
  In this paper we argue the possibility that  fermion masses, in particular quarks, originate through the condensation of a fourth family that interacts with all of the quarks via a contact four-fermion term coming from the existence of torsion  on the spacetime. Extra dimensions are considered to avoid the hierarchy problem.
\end{abstract}

\pacs{04.50.-h,04.50.Kd,11.25.Mj,67.85.Fg}
\keywords{Condensation, Extra dimensions, Torsion, Generalized Gravity}

\maketitle

\section{Introduction}

Recently, ATLAS and CMS experiments at the CERN Large Hadron Collider (LHC) found a signal consistent with the standard model Higgs boson, with an approximate mass of \SI{125.6}{\GeV}~\cite{Aaltonen:2012qt,Aad:2012tfa,Chatrchyan:2012ufa}. This discovery will shed light on the mechanism behind the electroweak symmetry breaking (EWSB). Although the  establishment of the quantum numbers of the discovered resonance is a pending task, it is crucial to determine whether the EWSB is produced by weak or strong  coupled dynamics.

The standard model (SM) of weak and strong interactions, has proved itself to be remarkably consistent with the experimental measurements, including the high-precision tests~\cite{PDG}. However, the lack of compatibility with the gravitational interaction has driven the community to believe that the SM is a low-energy effective framework  of a yet unknown fundamental theory. One of the problems that points in that direction is the {\it hierarchy problem}, which indicates that new physics should appear at a few \si{\TeV} in order to stabilize the Higgs mass at scales much lower than the Planck scale \SI{e19}[$\sim$]{\GeV}.

Alternatively, strong coupled  scenarios of EWSB could solve the hierarchy problem as long as no fundamental scalars turn nonperturbative above the electroweak (EW) scale, while the breakdown of the electroweak symmetry is caused by  condensed states in the vacuum~\cite{PhysRevD.41.1647,Hill:2002ap,Hirn:2007we,Hirn:2008tc,Belyaev:2008yj,Quigg:2009xr,Andersen:2011yj}. On several of these models the EW symmetry is broken through the condensation of fermions, generating  a composite scalar which acts as Higgs boson~\cite{PhysRevD.76.055005,PhysRevD.78.115010,Sannino:2009za}. Even if these theories are successful in breaking the EW symmetry, they should be extended for giving masses to fermions~\cite{ArkaniHamed:2002qx,ArkaniHamed:2002qy,PhysRevD.69.075002,Perelstein2007247,Schmaltz:2005ky,Dimopoulos1982206,Kaplan1984183,Banks1984125,Kaplan1984187,Georgi1984152,Georgi1984216,Dugan1985299,Giudice:2007fh,Csaki:2008zd,Contino:2010rs,Barbieri:2012tu,Sakai2013429,KerenZur2013394}.


Recently, a mechanism for breaking the EW symmetry through the condensation of a fourth family of quarks within the framework of extra dimensions has been proposed~\cite{Burdman:2007sx}. 

In this model, the condensation is mediated by the exchange of Kaluza-Klein gluons, while a four-fermion interaction is added in order to communicate with the SM sector. In the effective theory, the four-fermion interaction will give origin to the Yukawa interaction of the composite Higgs. Although the construction of the four-fermion term is based on  symmetry and universality arguments, it has still been arbitrary. 
In this respect, the situation is similar to the SM where the Yukawa couplings are arbitrary and unrelated to the gauge sector.


Although this model gives origin to  masses and mixing on the quark sector due to the underlying four-fermion interaction on the bulk, a good reproduction of the CKM matrix requires  certain level of nonuniversality~\cite{CarcamoHernandez:2012xy}. 

The aim of this paper is to study the possible gravitational origin of the four-fermion interaction, in the context of the Cartan-Eintein theory in five dimensions, where the presence of torsion gives rise naturally to a  term with the desired characteristics. In this type of scenario, extra dimensions are considered because in four dimensions the gravitational scale --Planck's mass-- is huge, and phenomenological effects are heavily suppressed (see for example Ref~\cite{Chang:2000yw,Kostelecky:2007kx,Zubkov:2010sx}).

This paper is organized as follows. In Sec.~\ref{CEG} a review about the induction of a four-fermion contact interaction from the coupling of Cartan-Einstein gravity with fermions is shown. In Sec.~\ref{sec:FC} a brief derivation of the effective model is presented and  the fermion condensation of the set up is performed. Finally, a discussion of results and conclusions are given in Sec.~\ref{sec:disc}. For completeness, a series of appendixes have been included: In Appendix~\ref{app:notation} the notation is explained; In Appendix~\ref{sec:actions} the equivalency between the gravitational and Dirac's actions written in differential forms and their customary form is proven. Additionally, in Appendix~\ref{sec:Fierz} a set of useful Fierz identities is stated.










\section{\label{CEG}Cartan-Einstein Gravity}

Cartan generalized the gravitational theory of Einstein, by considering connections which are not necessarily torsion free. This generalization is easily worked out using  the first order formalism of gravity. 

The information of the spacetime geometry is then encoded in a pair of fields, the vielbein, defined through
\begin{align}
  \hat{g}_{\hat{\mu}\hat{\nu}} = \eta_{\hat{a} \hat{b}}\VI{a}{\mu}\VI{b}{\nu},
\end{align}
and the spin connection, $\(\hat{\omega}_{\hat{\mu}}\)^{\hat{a}}{}_{\hat{b}}$, which encipher the same information as the Levi-Civita connection plus an additional term referring to the presence of torsion.

After defining the vielbein 1-forms, $\VIF{a} = \VI{a}{\mu}\df x^{\hat{\mu}}$, and the spin connection 1-form, $\SPIF{a}{b} = \(\hat{\omega}_{\hat{\mu}}\)^{\hat{a}}{}_{\hat{b}}\df x^{\hat{\mu}}$, the curvature of the spacetime is found through the structure equations
\begin{align}    
  \df\VIF{a}+\SPIF{a}{b}\w\VIF{b} &= \TF{a},\label{struc.eq.1}\\   
  \df\SPIF{a}{b}+\SPIF{a}{c}\w\SPIF{c}{b}&=\RIF{a}{b}.\label{struc.eq.2}
\end{align}
where $\TF{a}$ and $\RIF{a}{b}$ are the torsion and curvature 2-forms, respectively.

Finally, the gravitational action is 
\begin{align}
  S_{\text{gr}} = \frac{1}{2\kappa^2}\int\frac{\epsilon_{\hat{a}_1\cdots \hat{a}_D}}{(D-2)!}\hRif{\hat{a}_1 \hat{a}_2}{}\w\hvif{\hat{a}_3}\w\cdots\w\hvif{\hat{a}_D}.\label{CE-action}
\end{align}
The difference between this action and the one of Einstein-Hilbert is that the curvature tensor has contributions due to the torsion.

The action shown in Eq.~\eqref{CE-action} is the minimal extension of gravitation due to torsion. More general theories can be built out of curvature and torsion; however, there are ambiguities on the procedure, which are bypassed by restricting oneself to the minimal construction~\cite{RevModPhys.48.393,Belyaev:1998ax,Shapiro:2001rz,PhysRevD.75.034014}.

When considering pure gravity, the equation of motion from Eq.~\eqref{CE-action} are the usual Einstein's equations, because the equations of motion for the spin connection implies a vanishing torsion. However, the previous statement is not valid  in the presence of fermionic fields.

\subsection{\label{CEGF}Cartan-Einstein gravity coupled to fermions}

The Dirac action  can be written in terms of differential forms as
\begin{align}
  S_\Psi ={}& -\int \frac{\epsilon_{\hat{a}_1\cdots \hat{a}_D}}{(D-1)!} \Bps \hvif{\hat{a}_1}\w\cdots\w\hvif{\hat{a}_{D-1}}\ga^{\hat{a}_D}\hat{\cdf}\Psi \notag\\
 & -m\int\frac{\epsilon_{\hat{a}_1\cdots \hat{a}_D}}{D!}\Bps \hvif{\hat{a}_1}\w\cdots\w\hvif{\hat{a}_{D}}\Psi,\label{D-action}
\end{align}
with $\hat{\cdf}$ the exterior derivative twisted by the spin connection (see Eqs. \eqref{eq:covd} and \eqref{eq:cdf}).


Therefore, the equations of motion for the whole system are,
\begin{align}
  \hat{\Ri}^{\hat{m}}{}_{\hat{a}_3} -\frac{1}{2}\hat{\Ri} \delta^{\hat{m}}_{\hat{a}_3} &=\kappa^2\bar{\Psi}\[\ga^{\hat{m}}\hat{\covd}_{\hat{a}_3}-\delta^{\hat{m}}_{\hat{a}_3}\(\fy{\hat{\covd}}+m\)\]\Psi, \label{ED-eom}\\
  \hcont{\hat{a} \hat{b} \hat{c} }{}{} &= -\frac{\kappa^2}{4} \bar{\Psi}\ga_{\hat{a} \hat{b} \hat{c}}\Psi, \label{cont-constraint}
\end{align}
where $\hcont{\hat{a} \hat{b} \hat{c} }{}{}$ is the contorsion, and it is expressed as a function of the torsion by
\begin{equation*}
 \hcont{\hat{a} \hat{b} \hat{c} }{}{} = \frac{1}{2}\(\hat{\tor}_{\hat{b} \hat{c} \hat{a} } + \hat{\tor}_{\hat{b} \hat{a} \hat{c} } + \hat{\tor}_{\hat{a} \hat{b} \hat{c} }\).
\end{equation*}

Equation~\eqref{cont-constraint} is  a constraint and it can be substituted back into the action. It is possible to express $\SPIF{a}{b}$ as the sum of the torsion-free connection and the contribution of the contorsion,
\begin{align}
  \SPIF{a}{b} \mapsto \SPIF{a}{b}+\CONTF{a}{b},
\end{align}
where $\CONTF{a}{b} = \hcont{\hat{\mu}}{\hat{a}}{\hat{b}}\df x^{\hat{\mu}}$.

Substituting this into the total action given by Eqs. \eqref{CE-action} and \eqref{D-action}, one obtains
\begin{align}
  S ={}& \int\de[V\!\!]_D\;\[\frac{1}{2\kappa^2}\hat{\Ri} -\Bps\(\fy{\hat{\covd}}+m\)\Psi\right.\notag\\
  & \left.+\frac{\kappa^2}{32}\bar{\Psi}\ga_{\hat{a} \hat{b} \hat{c}}\Psi\bar{\Psi}\ga^{\hat{a} \hat{b} \hat{c}}\Psi\],\label{action}
\end{align}
which is a torsion-free theory of gravity coupled to a fermion with a four-fermion contact interaction.

In order to compare a model with experimental data, it must contain all of the fields representing the particles of the standard model. Therefore, the whole spectrum of fermions should be added. However, possible modifications of gauge interactions won't be considered. In the case of several fermions, the whole action would be
\begin{align}
  S ={}& \int\de[V\!\!]_D\;\[\frac{1}{2\kappa^2}\hat{\Ri} -\sum_{n\in\text{flav.}} \Bps_n\(\fy{\hat{\covd}}+m\)\Psi_n\right.\notag\\
  & \left.+\sum_{m,n\in\text{flav.}}\frac{\kappa^2}{32}\bar{\Psi}_{m}\ga_{\hat{a} \hat{b} \hat{c}}\Psi_{m}\bar{\Psi}_{n}\ga^{\hat{a} \hat{b} \hat{c}}\Psi_{n}\],\label{mult-action}
\end{align}
where indices $m$ and $n$ represent flavor. Note that coupling constants differs by a factor two depending of whether the four-fermion interaction includes a single or a couple of flavors~\footnote{There are other constructions of torsional gravity coupled with fermions. The interested reader is encouraged to review Ref.~\cite{Fabbri:2011kq,Fabbri:2012yg}}. 


\section{\label{sec:FC}Fermion Condensation}

In this section a model containing the four-fermion interaction in Eq.~\eqref{mult-action} is constructed. It is assumed that the dimensionality of the spacetime is five. Therefore, the effective theory in four dimensions should be found. Although a brief derivation of the effective theory is shown below, a more detailed analysis can be found in  Ref.~\cite{CCSZ}.

The interest in this kind of five-dimensional models has grown recently because they could explain the appearance of quark masses and mixing, induced by the condensation of fermions of a fourth family, whenever a special type of four-fermion interaction term exists~\cite{CarcamoHernandez:2012xy}.

\subsection{Effective theory in four dimensions}

First of all, using the fact that the irreducible representation of the gamma matrices in five and four dimensions are the same, the antisymmetric product $\ga_{\hat{a} \hat{b} \hat{c}}$ is decomposed into
\begin{align}
  \(\ga_{\hat{a} \hat{b} \hat{c}}\)\(\ga^{\hat{a} \hat{b} \hat{c}}\) = \(\ga_{a b c}\)\(\ga^{a b c}\) + 3 \(\ga_{a b *}\)\(\ga^{a b *}\).
\end{align}
Additionally, in the last term the product of gamma matrices can be decomposed further, thus
\begin{align}
  \(\ga_{\hat{a} \hat{b} \hat{c}}\)\(\ga^{\hat{a} \hat{b} \hat{c}}\) &= \(\ga_{a b c}\)\(\ga^{a b c}\) + 3 \(\ga_{a b}\ga_*\)\(\ga^{a b}\ga^*\)\\
  &= 6\(\ga_{a}\ga^*\)\(\ga^{a}\ga^*\) + 3 \(\ga_{a b}\ga^*\)\(\ga^{a b}\ga^*\),
\end{align}
where the definition $\ga^*=\imath\ga^0\ga^1\ga^2\ga^3$ has been used.

Next, using the decomposition of the five-dimensional  fermions in terms of chiral four-dimensional ones,
\begin{align}
  \Psi_m(x,\xi) = f_{m+}(\xi)\;\psi_{m+}(x) +f_{m-}(\xi)\;\psi_{m-}(x),
\end{align}
and the chirality condition  $\ga^*\psi_{m\pm} =\pm\psi_{m\pm}$, the currents involved on Eq.~\eqref{mult-action} are
\begin{align}
  \(J_m\)^{a*} &= \bar{\Psi}_m\ga^{a}\ga^*\Psi_m\notag\\
  &= \abs{f_{m+}}^2 \bps_{m+}\ga^a \psi_{m+} - \abs{f_{m-}}^2 \bps_{m-}\ga^a \psi_{m-}
\end{align}
and
\begin{align}
  \(J_m\)^{ab*}  =& \bar{\Psi}_m\ga^{a b}\ga^*\Psi_m\notag\\
  =& -f^*_{m+} f_{m-} \;\bps_{m+}\ga^{a b} \psi_{m-} \notag\\
  & + {f^*_{m-}}f_{m+}\; \bps_{m-}\ga^{ab} \psi_{m+},
\end{align}
where possible Kaluza-Klein excitations have been dropped. Moreover, in order to evade an overwhelming notation, define~\footnote{Notice that under interchange of chirality ($+\leftrightarrow -$) the quantities change as $a_m\leftrightarrow b_m$ and $c_m\leftrightarrow c_m^*$.}
\begin{align}
  a_m &= \abs{f_{m+}}^2,&  b_m &= \abs{f_{m-}}^2,\\
  c_m &= {f_{m+}^* f_{m-}},&  c^*_m &= {f_{m-}^* f_{m+}}.
\end{align}

Now, using the Fierz identities (see Appendix~\ref{sec:Fierz}) together with the identity for the $SU(N_c)$ generators
\begin{align}
  (T^A)_{ij}(T_A)_{kl}&=\frac{1}{2}\(\delta_{il}\delta_{kj}-\frac{1}{N_c} \delta_{ij}\delta_{kl}\),
\end{align}
the four-fermion interaction terms  yield
\begin{widetext}
  \begin{align}
    \Bps_m\ga^{a}\ga^*\Psi_m\Bps_n\ga_{a}\ga^*\Psi_n &=
    + a_ma_n\(\bps_{m+}\gam{a}\psi_{m+}\)\(\bps_{n+}\ga_{a}\psi_{n+}\) + b_m b_n\(\bps_{m-}\gam{a}\psi_{m-}\)\(\bps_{n-}\ga_{a}\psi_{n-}\) \notag\\
    &\quad - a_m b_n \(\bps_{m+}\gam{a}\psi_{m+}\)\(\bps_{n-}\ga_{a}\psi_{n-}\) - b_m a_n \(\bps_{m-}\gam{a}\psi_{m-}\)\(\bps_{n+}\ga_{a}\psi_{n+}\) 
    \notag\\
    &= + a_ma_n\(\bps_{m+}\gam{a}\psi_{m+}\)\(\bps_{n+}\ga_{a}\psi_{n+}\) 
    \notag\\
    &\quad + 2 a_m b_n \[ 2 \(\bps_{m+} T^A \psi_{n-}\)\(\bps_{n-} T_A \psi_{m+}\) + \frac{1}{N_c} \(\bps_{m+} \psi_{n-}\)\(\bps_{n-} \psi_{m+}\) \]\\
    &\quad + \Big\{ + \leftrightarrow - \Big\}  \notag
  \end{align}
  and
  \begin{align}
    \Bps_m\ga^{ab}\ga^*\Psi_m\Bps_n\ga_{ab}\ga^*\Psi_n &= 
    c_m c_n \(\bps_{m+}\ga^{ab}\psi_{m-}\)\(\bps_{n+}\ga_{ab}\psi_{n-}\) 
    + c_m^* c_n^* \(\bps_{m-}\ga^{ab}\psi_{m+}\)\(\bps_{n-}\ga_{ab}\psi_{n+}\)  \notag\\
    &= 16 c_m c_n \(\bps_{m+} T^A \psi_{n-}\)\(\bps_{n+} T_A \psi_{m-}\) + \frac{8}{N_c}c_m c_n  \(\bps_{m+} \psi_{n-}\)\(\bps_{n+} \psi_{m-}\) \notag\\
    &\quad + 4 c_m c_n \(\bps_{m+} \psi_{m-}\)\(\bps_{n+} \psi_{n-}\) + \Big\{ + \leftrightarrow - \Big\}.
  \end{align}
\end{widetext}

In the following, the discussion will be focused on four-fermion quark-quark interaction given by $\(\bps_{m+}\psi_{n-}\)\(\bps_{n+}\psi_{m-}\)$ terms, because a dynamical symmetry breaking mechanism as that presented by Bardeen {\it et~al.} in Ref.~\cite{PhysRevD.41.1647} is desirable. However, it is worth noticing that in addition to the quark interactions, there are four-lepton interaction and lepton-quark interactions. The former would generate effects as discussed in Ref.~\cite{Burdman:2009ih}, while the latter would emulate lepto-quark interactions and therefore a general model would mimic grand unified theories (GUTs) or supersymmetric scenarios~\footnote{These general aspects of the four-fermion interaction are been considered and will be developed in a future manuscript~\cite{OCF-future}.}.


Therefore, the effective four-fermion action 
obtained from the five-dimensional spacetime is
\begin{widetext}
  \begin{align}
    S[\psi^4] &= \frac{3\kappa^2}{16}\int \de[V\!\!]_{D} \bigg\{a_ma_n\(\bps_{m+}\gam{a}\psi_{m+}\)\(\bps_{n+}\ga_{a}\psi_{n+}\)
    + 2 a_m b_n \[ 2 \(\bps_{m+} T^A \psi_{n-}\)\(\bps_{n-} T_A \psi_{m+}\) + \frac{1}{N_c} \(\bps_{m+} \psi_{n-}\)\(\bps_{n-} \psi_{m+}\) \]\notag\\
    &\phantom{= \frac{3\kappa^2}{16}\int \de[V\!\!]_{D} \bigg\{}
    + 2 c_m c_n \[ 4 \(\bps_{m+} T^A \psi_{n-}\)\(\bps_{n+} T_A \psi_{m-}\) + \frac{2}{N_c}  \(\bps_{m+} \psi_{n-}\)\(\bps_{n+} \psi_{m-}\) +  \(\bps_{m+} \psi_{m-}\)\(\bps_{n+} \psi_{n-}\)\] \notag\\
    &\phantom{= \frac{3\kappa^2}{16}\int \de[V\!\!]_{D} \bigg\{}
     + \Big\{ + \leftrightarrow - \Big\}\bigg\}.
    \label{general4FI}
  \end{align}
\end{widetext}

\subsection{Condensation, masses and mixing}

When two currents $J$ and $J'$ are coupled, it is equivalent to introducing auxiliary fields  {through the substitution}
\begin{align}
 JJ'\mapsto JH'+HJ'-HH',
\end{align}
where the equations of motion for the auxiliary fields are $H=J$ and $H'=J'$. Then, the mean-field approximation can be used, giving $H\approx \vev{J}$ and $H'\approx \vev{J'}$. 

Here, currents have the form
\begin{align}
  J^{\Gamma}=\bar\psi_{mr}\Gamma\psi_{ns}
\end{align}
with $\Gamma=\{1,\gam{a},T^A\}$, $m$ and $n$ the flavor indices, and $r,s$ the chirality indices.
The condensation will pair only the fourth generation of quarks. Since Lorentz and color symmetries must be preserved,  the only allowed condensed current will be with $\Gamma=1$.

The flavor sum on Eq.~\eqref{general4FI} separates into
\begin{align}
  \Lag_{\psi^4}^{(5)} 
  &= 
  \sum_{q,q'}\Lag_{qq'}
  +2\sum_{q,Q}\Lag_{qQ}
  +\sum_{Q, Q'} \Lag_{QQ'},
\end{align}
where $Q,Q'$ represent the fourth quark generation. 
The second term will generate quark masses for the first three generations, with $m_q\sim\kappa^2\vev{\bar{Q}Q}$. 
The last one provides masses for the fourth generation of quarks.

\begin{widetext}
  After condensation, the mass Lagrangian for the first three generations of quarks is
  \begin{equation}
    \Lag_{q^2}^{(5)} =
    \frac{3\kappa^2}{4} c_q\[ c_B\vev{\bar{B}^+ B^-} + c_T\vev{\bar{T}^+ T^-} \] \bar{q}^+ q^- + \Big\{ + \leftrightarrow - \Big\}
  \end{equation}
  
  For the fourth generation of quarks we have
  \begin{align}
    \Lag_{Q^2}^{(5)}  & = 
    \frac{3\kappa^2}{4}\[\frac{a_Tb_T}{N_c}\vev{\bar{T}^+ T^-} + 2\frac{c_Tc_T}{N_c}\vev{\bar{T}^+ T^-} + c_Bc_T \vev{\bar{B}^+ B^-} + c_Tc_T\vev{\bar{T}^+ T^-}\] \bar{T}^+ T^-\notag\\
    &\quad \frac{3\kappa^2}{4}\[\frac{a_Bb_B}{N_c}\vev{\bar{B}^+ B^-} + 2\frac{c_Bc_B}{N_c}\vev{\bar{B}^+ B^-} + c_Bc_T \vev{\bar{T}^+ T^-} + c_Bc_B\vev{\bar{B}^+ B^-}\] \bar{B}^+ B^-\notag\\
    &\quad + \Big\{ + \leftrightarrow - \Big\}.
  \end{align}
\end{widetext}
Assuming that all profiles are real, one might define the coefficients
\begin{align}
  f_{\rho\sigma}=\int_0^R d\xi \frac{\sqrt{\abs{\hat{g}}}}{\sqrt{\abs{g}}} f_+^\rho f_-^\rho f_+^\sigma f_-^\sigma,
\end{align}
the masses of the first three generations of quarks,
\begin{equation}
  m_q = -\frac{3\kappa^2}{4}\left[ f_{qT}\vev{\bar TT} + f_{qB} \vev{\bar BB}  \right],\label{quarkm}
\end{equation}
and the fourth generation quark masses,
\begin{widetext}
  \begin{align}
    m_T &= -\frac{3\kappa^2}{4}\left[\left( 1+\frac{3}{N_c} \right) f_{TT}\vev{\bar TT} + f_{TB}\vev{\bar BB}\right]
    -\frac{g_+^Tg_-^T}{M^2_{KK}} \vev{\bar TT}
    \\
    m_B &=  -\frac{3\kappa^2}{4}\left[\left( 1+\frac{3}{N_c} \right) f_{BB}\vev{\bar BB} + f_{TB}\vev{\bar TT}\right]
    -\frac{g_+^B g_-^B}{M^2_{KK}}\vev{\bar BB},
  \end{align}
\end{widetext}
where the last terms correspond to the exchange of the first Kaluza-Klein gluon mode, with a  mass of $M_{KK}$. 

The inclusion of leptons is straightforward, just adding extra flavors 
singlet of color. Their masses are
\begin{equation}
  m_\ell = -\frac{3\kappa^2}{4}\left[ f_{\ell T}\vev{\bar TT}  + f_{\ell B}\vev{\bar BB} \right].\label{leptonm}
\end{equation}

Note that Eqs.~\eqref{quarkm} and \eqref{leptonm} coincide with the shape of the masses in Ref.~\cite{Burdman:2007sx,CarcamoHernandez:2012xy}. Only the fourth family masses differ due to the $TTBB$ interaction term present in our model.




\section{\label{sec:disc} Discussion and Conclusions}


The developed model has been constructed by considering the quark sector of the standard model coupled to torsionful gravity. As result, a contact four-fermion interaction term appears, coupling at most two different pairs of quarks, providing a natural arena for  symmetry breaking through fermion condensation and, additionally, fermions acquire mass.


A fourth fermion family has been included in order  to condense them, and generate all the wanted features of technicolor, leaving the standard model quarks  outside the condensation scheme. The proposed scenario, as shown above, takes into account a partial contribution to the condensation coming from gravitational torsion, although additional contributions come from the Kaluza-Klein towers.

Due to the special kind of interaction induced by the presence of torsion, the effective mass matrix of fermions is diagonal. This characteristic ensures a simple model, in the sense that no other sources of freedom are involved. Nonetheless, it implies that the  Cabbibo-Kobayashi-Maskawa mass matrix has the same status as that in the standard model.

Additionally, the introduction of extra dimensions is necessary for the gravitational coupling constant $\kappa^2$ to be of order \si{\TeV}$^n$, with $n$ the number of extra dimensions. This serves to ``solve'' the {\it hierarchy problem} and additionally assures that the four-fermion interaction is not suppressed by the four-dimensional Planck's mass $M_{pl}\sim\SI{e19}{\GeV}$, but by a much weaker fundamental gravitational scale, $M_*\sim\si{\TeV}$.

Despite the fact that the considered model does not have the richness of the one presented in Ref.~\cite{CarcamoHernandez:2012xy}, by providing an explanation of the origin of quark masses and mixing, the spectrum of particles is provided by the integration of the profiles along the extra dimension. Since these profiles are usually exponential terms, it can be argued that small differences on the constants that describe them would generate great mass differences, giving a {\it natural} hierarchy on the quark masses. Moreover, due to its simplicity, the model does not require additional symmetries or structures.

In the context of Higgs physics, it is still arguable a composite Higgs with small mass, since fermionic loops represent a negative contribution to the mass of the boson, as a binding energy. This argument is essentially the same as that in walking technicolor models, where the Higgs resonance is around \SI{125}{\GeV} depite the  fact that the   technifermions' masses could be of order \si{\TeV}.

Finally, it is worth  remarking that fermion masses in the proposed scenario are similar to those in previous models. However, the following  differences should be highlighted: (a) This model contains a four-fermion interaction introduced by a minimal generalization of general relativity due to the presence of torsion, (b) no extra symmetries have been imposed on the construction of the model, (c) Naturally, fermions are paired in the extra interaction, and the quark mixing keeps the status as in the standard model, and (d) although this model starts with a different current structure (compared with the mentioned models), the effective theory has the same physical terms; therefore,  condensation for this model is assured by the conditions on Ref.~\cite{Burdman:2007sx,CarcamoHernandez:2012xy}.

At the LHC, bounds to the four-fermion interaction term have been found (See Ref.~\cite{Aad:2011aj,ATLAS:2012pu,Aad:2012bsa}), typically $\Lambda\sim\SI{ 10}{\TeV}$. Additionally, there exist cosmological constraints, (See Ref.~\cite{Chang:2000yw}), where $\Lambda\sim \SI{28}{\TeV}$.
However, these constraints are in four dimensions, while our model has one extra dimension. Since the parameters of the theory depend on the particular construction, no universal constraints can be imposed.
Nonetheless, in a previous report, we found some constraints to the size of the extra dimensions  of the spacetime~\cite{CCSZ}.

\section*{Acknowledgements}

We would like to thank  A. Toloza, A. Anabalon, S. Kovalenko, and I. Schmidt for helpful discussions and comments. This work was partially supported  by  Conicyt (Chile) under Grant No. 21130179 and  Basal Project FS0821, and Fondecyt Projects No. 1120346 and No. 11000287.

\appendix

\section{\label{app:notation}CONVENTIONS AND NOTATIONS}

\subsection*{Spacetime and metric}
Throughout  the paper, the metrics have a  signature that is  mostly positive. 
Since the vielbein formalism is used extensively, the distinction between flat and curved coordinates is through latin and greek indices, respectively.   
Moreover, hatted indices run over the whole spacetime (say five-dimensional spacetime) while unhatted ones run over a hypersurface restriction, i.e., a four-dimensional spacetime. 

The vielbein formalism relies on the definition of a Lorentzian frame at each point of the spacetime through the relation
\begin{align*}
  \hat{g}_{\hat{\mu}\hat{\nu}} = \eta_{\hat{a} \hat{b}}\VI{a}{\mu}\VI{b}{\nu},
\end{align*}
where $\VI{a}{\mu}$ are the vielbeins, and they encode the geometric information of the curved spacetime when one ``translates'' into the tangent space. 
These objects are invertible, and their inverse are denoted by 
\begin{align*}
  \VIN{\mu}{a} \equiv \(\VI{a}{\mu}\)^{-1}.
\end{align*}


\subsection*{Clifford algebra and spinors}
The gamma matrices are defined on the tangent space, and they satisfy the Clifford algebra,
\begin{equation}
  \left\{\GAM{a},\GAM{b}\right\}=2\eta^{\hat{a}\hat{b}}\mathds{1}.\label{Cliff-alg}
\end{equation}

For the sake of clarity in the following, spacetimes are considered five and four dimensional. Thus, hatted indices run over $\hat{a}= 0,...,4$, while unhatted ones run over $a=0,...,3$.

In even dimensions one can define the chirality matrix $\ga^*$, satisfying the properties
\begin{align*}
  \left\{\gam{a},\ga^* \right\}=0,\quad (\ga^*)^2=\mathds{1},
\end{align*}
and the (chiral) projector operators,
\begin{equation}
  P_\pm= \frac{\mathds{1}-\ga^*}{2},
\end{equation}
are both nontrivial.

On the other hand, odd-dimensional Clifford algebras are constructed by using the gamma matrices of the codimension-one spacetime, via 
\begin{align}
  \GAM{a} = \(\gam{b},\ga^*\).\label{GAMa}
\end{align}
These odd-dimensional Clifford algebras have  trivial projectors $P_\pm$, and therefore chiral fermions cannot be defined.

In any dimension one may define a set of generators of the Lorentz algebra, constructed with the elements of the Clifford algebra (\ref{Cliff-alg}). These generators of the Lorentz algebra are
\begin{equation}
  \J^{\hat{a}\hat{b}}=-\frac{\imath}{4}\left[\GAM{a},\GAM{b}\right],
\end{equation}
which are known as the generators in the spin representation.

In curved spacetime the Dirac equation is obtained by replacing the partial derivative by the  derivative twisted by the spin connection,
\begin{align}
  \PA{\mu}\to \hat{\covd}_{\hat{\mu}}= \PA{\mu}+\frac{\imath}{2}\(\SPI{\mu}\)^{\hat{a}\hat{b}}\J_{\hat{a}\hat{b}},\label{eq:covd}
\end{align}
which defines the exterior derivative twisted operator by
\begin{align}
  \hat{\cdf} &= \[\PA{\mu}+\frac{\imath}{2}\(\SPI{\mu}\)^{\hat{a}\hat{b}}\J_{\hat{a}\hat{b}}\]\df x^\mu\notag\\
  &= \df +\frac{\imath}{2}\hspif{\hat{a}\hat{b}}{}\J_{\hat{a}\hat{b}}.\label{eq:cdf}
\end{align}

The Dirac-Feynman slash notation  must be interpreted as
\begin{equation}
  \fy{\hat{\covd}}=\VIN{\mu}{a}\GAM{a}\hat{\covd}_{\hat{\mu}},
\end{equation}
with $\VIN{\mu}{a}$  the inverse of the vielbein $\VI{a}{\mu}$.

\section{\label{sec:actions}EQUIVALENCY OF ACTIONS}

In this appendix the equivalency between the actions in Eqs. \eqref{CE-action} and \eqref{D-action} with their best known form 
\begin{align}
  S_{gr} &= \frac{1}{2\kappa^2}\int\de[V\!\!]_D\; \Ri,\\
  S_\Psi &= -\int\de[V\!\!]_D\; \Bps\(\fy\covd+m\)\Psi,
\end{align}
is shown.

In order to achieve the goal, one needs a couple of identities which follow from the signature of the Lorentz metric and the orientability of the spacetime and the usual $\epsilon$ identities,
\begin{align}
  \hvif{\hat{a}_1}\w\cdots\w\hvif{\hat{a}_D}&= -\epsilon^{\hat{a}_1\cdots\hat{a}_D}\de[V\!\!]_D,\label{viel-id}\\
  \epsilon^{\hat{a}_1\hat{a}_2\cdots\hat{a}_D}\epsilon_{\hat{a}_1\hat{a}_2\cdots\hat{a}_D} &= -(D)!,\label{eps-cont}\\
  \epsilon^{\hat{m}\hat{a}_2\cdots\hat{a}_D}\epsilon_{\hat{n}\hat{a}_2\cdots\hat{a}_D} &= -(D-1)!\delta^{\hat{m}}_{\hat{n}},\label{eps-1}\\
  \epsilon^{\hat{m}_1\hat{m}_2\hat{a}_3\cdots\hat{a}_D}\epsilon_{\hat{n}_1\hat{n}_2\hat{a}_3\cdots\hat{a}_D} &= -(D-2)!\tensor*{\delta}{*^{\hat{m}_1}_{\hat{n}_1}^{\hat{m}_2}_{\hat{n}_2}},\label{eps-2}
\end{align}
where
\begin{align*}
  \tensor*{\delta}{*^{\hat{m}_1}_{\hat{n}_1}^{\hat{m}_2}_{\hat{n}_2}}= \delta^{\hat{m}_1}_{\hat{n}_1}\delta^{\hat{m}_2}_{\hat{n}_2} - \delta^{\hat{m}_1}_{\hat{n}_2}\delta^{\hat{m}_2}_{\hat{n}_1}.
\end{align*}

\begin{Thm}
  \begin{align}
   \frac{\epsilon_{\hat{a}_1\cdots \hat{a}_D}}{(D-2)!}\hRif{\hat{a}_1 \hat{a}_2}{}\w\hvif{\hat{a}_3}\w\cdots\w\hvif{\hat{a}_D}= \de[V\!\!]_D\;\Ri
  \end{align}
\end{Thm}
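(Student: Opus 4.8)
The plan is to substitute the component expansion of the curvature 2-form into the left-hand side, collapse the resulting product of $D$ vielbein 1-forms into the volume form via the orientability identity, and then contract the two Levi-Civita symbols to recover the doubly-contracted Riemann tensor, which is precisely the Ricci scalar.

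First I would expand the curvature 2-form in the vielbein basis as $\hRif{\hat{a}_1\hat{a}_2}{} = \tfrac{1}{2}\hat{\mathcal{R}}^{\hat{a}_1\hat{a}_2}{}_{\hat{b}\hat{c}}\,\hvif{\hat{b}}\w\hvif{\hat{c}}$, so that the left-hand side becomes
\begin{align*}
  \frac{1}{2(D-2)!}\,\epsilon_{\hat{a}_1\cdots\hat{a}_D}\,\hat{\mathcal{R}}^{\hat{a}_1\hat{a}_2}{}_{\hat{b}\hat{c}}\;\hvif{\hat{b}}\w\hvif{\hat{c}}\w\hvif{\hat{a}_3}\w\cdots\w\hvif{\hat{a}_D}.
\end{align*}
The wedge product now contains exactly $D$ vielbein 1-forms, so identity~\eqref{viel-id} replaces it by $-\epsilon^{\hat{b}\hat{c}\hat{a}_3\cdots\hat{a}_D}\de[V\!\!]_D$, leaving a product of two $\epsilon$ symbols sharing the $D-2$ indices $\hat{a}_3,\ldots,\hat{a}_D$.

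The central step is that contraction, for which I would invoke~\eqref{eps-2} to obtain
\begin{align*}
  \epsilon_{\hat{a}_1\hat{a}_2\hat{a}_3\cdots\hat{a}_D}\,\epsilon^{\hat{b}\hat{c}\hat{a}_3\cdots\hat{a}_D}=-(D-2)!\left(\delta^{\hat{b}}_{\hat{a}_1}\delta^{\hat{c}}_{\hat{a}_2}-\delta^{\hat{b}}_{\hat{a}_2}\delta^{\hat{c}}_{\hat{a}_1}\right).
\end{align*}
The two explicit minus signs---one from~\eqref{viel-id}, one from~\eqref{eps-2}, both ultimately coming from the mostly-positive signature---cancel, and the $(D-2)!$ cancels the prefactor. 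Contracting the Kronecker deltas against the Riemann components and using the antisymmetry of $\hat{\mathcal{R}}^{\hat{a}_1\hat{a}_2}{}_{\hat{b}\hat{c}}$ in its last pair of indices makes the two surviving terms equal, which absorbs the remaining factor $\tfrac{1}{2}$ and yields $\hat{\mathcal{R}}^{\hat{a}_1\hat{a}_2}{}_{\hat{a}_1\hat{a}_2}\,\de[V\!\!]_D=\Ri\,\de[V\!\!]_D$, as claimed.

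I expect the only genuine obstacle to be sign bookkeeping rather than any conceptual difficulty: the mostly-positive signature inserts an overall minus into each of the $\epsilon$ identities, and one must check that these combine to cancel exactly, so that the final answer carries a clean $+\Ri$ and not a spurious sign. A secondary point worth stating carefully is the identification of the double trace $\hat{\mathcal{R}}^{\hat{a}_1\hat{a}_2}{}_{\hat{a}_1\hat{a}_2}$ with the Ricci scalar, which relies on the standard contraction conventions fixed earlier in the appendix.
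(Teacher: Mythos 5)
Your proposal is correct and follows essentially the same route as the paper's own proof: expand $\hRif{\hat{a}_1\hat{a}_2}{}$ in the vielbein basis, collapse the $D$-fold wedge product via Eq.~\eqref{viel-id}, and contract the two Levi-Civita symbols with Eq.~\eqref{eps-2}, with the two signature-induced minus signs cancelling and the antisymmetry of the Riemann components absorbing the factor $\tfrac{1}{2}$. The only difference is that you spell out the sign and delta bookkeeping more explicitly than the paper, which simply states the contracted result.
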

\begin{proof}
  Start by writing the curvature 2-form in the vielbein basis,
  \begin{align*}
    \hRif{\hat{a}_1 \hat{a}_2}{} = \frac{1}{2}\hat{\Ri}^{\hat{a}_1 \hat{a}_2}{}_{\hat{m} \hat{n}}\VIF{m}\w\VIF{n},
  \end{align*}
then from Eq. \eqref{viel-id}, one gets
\begin{align*}
  \hRif{\hat{a}_1 \hat{a}_2}{}\w\hvif{\hat{a}_3}\w\cdots\w\hvif{\hat{a}_D} = -\frac{1}{2}\hat{\Ri}^{\hat{a}_1 \hat{a}_2}{}_{\hat{m} \hat{n}}\epsilon^{\hat{m}\hat{n}\hat{a}_3\cdots\hat{a}_D}\de[V\!\!]_D.
\end{align*}
Contracting the last expression with $\epsilon_{\hat{a}_1\cdots \hat{a}_D}$ yields
\begin{align*}
  \epsilon_{\hat{a}_1\cdots \hat{a}_D}\hRif{\hat{a}_1 \hat{a}_2}{}\w\hvif{\hat{a}_3}\w\cdots\w\hvif{\hat{a}_D} 
  = (D-2)! \;\Ri\;\de[V\!\!]_D,
\end{align*}
which ends the proof.
\end{proof}

\begin{Thm}
  \begin{align}
    \de[V\!\!]_D\; \Bps\(\fy\covd+m\)\Psi ={}& \frac{\epsilon_{\hat{a}_1\cdots \hat{a}_D}}{(D-1)!} \Bps \hvif{\hat{a}_1}\w\cdots\w\hvif{\hat{a}_{D-1}}\ga^{\hat{a}_D}\hat{\cdf}\Psi \notag\\
    & +m\int\frac{\epsilon_{\hat{a}_1\cdots \hat{a}_D}}{D!}\Bps \hvif{\hat{a}_1}\w\cdots\w\hvif{\hat{a}_{D}}\Psi.
  \end{align}
\end{Thm}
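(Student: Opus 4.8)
The plan is to prove the stated equality of top-forms by matching the right-hand side against $\de[V\!\!]_D\,\Bps(\fy{\hat{\covd}}+m)\Psi$ term by term, treating the mass piece and the kinetic piece separately. Each computation follows the template of the previous theorem: use the orientation identity \eqref{viel-id} to trade a wedge of vielbein $1$-forms for the volume form times a Levi-Civita symbol, and then collapse the resulting product of two $\epsilon$'s with the appropriate $\epsilon$-contraction identity.

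First I would dispose of the mass term, which is immediate. Identity \eqref{viel-id} sends $\hvif{\hat{a}_1}\w\cdots\w\hvif{\hat{a}_D}$ to $-\epsilon^{\hat{a}_1\cdots\hat{a}_D}\de[V\!\!]_D$, and then the full contraction $\epsilon_{\hat{a}_1\cdots\hat{a}_D}\epsilon^{\hat{a}_1\cdots\hat{a}_D}=-(D)!$ of \eqref{eps-cont} cancels the $1/D!$ prefactor together with both minus signs, leaving exactly $m\,\de[V\!\!]_D\,\Bps\Psi$.

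The kinetic term requires the actual work. Since $\hat{\cdf}\Psi$ is a $1$-form, I would expand it in the coframe as $\hat{\cdf}\Psi=(\hat{\covd}_{\hat{b}}\Psi)\,\hvif{\hat{b}}$, where $\hat{\covd}_{\hat{b}}=\VIN{\mu}{b}\hat{\covd}_{\hat{\mu}}$ follows directly from the definition \eqref{eq:cdf}; note that $\ga^{\hat{a}_D}$ and $\hat{\covd}_{\hat{b}}\Psi$ carry no form degree and pass freely through the wedge. The product $\hvif{\hat{a}_1}\w\cdots\w\hvif{\hat{a}_{D-1}}\w\hvif{\hat{b}}$ is then a top-form, so \eqref{viel-id} replaces it with $-\epsilon^{\hat{a}_1\cdots\hat{a}_{D-1}\hat{b}}\de[V\!\!]_D$, and what survives is the partial contraction $\epsilon_{\hat{a}_1\cdots\hat{a}_{D-1}\hat{a}_D}\,\epsilon^{\hat{a}_1\cdots\hat{a}_{D-1}\hat{b}}$, summed over the first $D-1$ indices.

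This last contraction is the one delicate point: the identity \eqref{eps-1} is written with the summed indices in the trailing slots and the free index in front, whereas here the summed indices occupy the leading slots. I would fix this by cyclically pushing the free indices $\hat{a}_D$ and $\hat{b}$ to the front of each symbol, which produces a factor $(-1)^{D-1}$ on each; as these enter as a product, the signs square to $+1$, and \eqref{eps-1} then gives $\epsilon_{\hat{a}_1\cdots\hat{a}_{D-1}\hat{a}_D}\,\epsilon^{\hat{a}_1\cdots\hat{a}_{D-1}\hat{b}}=-(D-1)!\,\delta^{\hat{b}}_{\hat{a}_D}$. The $(D-1)!$ cancels the prefactor, and the remaining delta contracts $\ga^{\hat{a}_D}\hat{\covd}_{\hat{b}}$ down to $\ga^{\hat{b}}\hat{\covd}_{\hat{b}}=\VIN{\mu}{b}\GAM{b}\hat{\covd}_{\hat{\mu}}=\fy{\hat{\covd}}$, which is precisely the slash of the definition. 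Adding the two pieces reproduces the left-hand side. The only genuine obstacle is disciplined sign bookkeeping: the minus signs built into \eqref{viel-id}, \eqref{eps-cont}, and \eqref{eps-1} by the Lorentzian signature must be tracked together with the reordering sign, and everything else is mechanical.
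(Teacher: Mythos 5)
Your proposal is correct and follows essentially the same route as the paper's proof: split into mass and kinetic pieces, use the orientation identity \eqref{viel-id} with the full contraction \eqref{eps-cont} for the mass term, and expand $\hat{\cdf}$ in the vielbein basis before applying \eqref{viel-id} with \eqref{eps-1} for the kinetic term. The only difference is that you make explicit the $(-1)^{D-1}$ reordering signs needed to bring the $\epsilon$-contraction into the form of \eqref{eps-1} (which indeed cancel in the product), a bookkeeping step the paper's proof silently absorbs.
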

\begin{proof}
  The proof is split in two parts. 

  First, consider the mass term. Using the identities in Eqs. \eqref{viel-id} and \eqref{eps-cont}, it follows that
  \begin{align}
    \frac{\epsilon_{\hat{a}_1\cdots \hat{a}_D}}{D!}\hvif{\hat{a}_1}\w\cdots\w\hvif{\hat{a}_{D}} = \de[V\!\!]_D ,
  \end{align}
  which ensures that the mass terms of both sides are equal.
  
  Next is the kinetic term. One begins by expanding the exterior covariant derivative in the vielbein basis,
  \begin{align}
    \hat{\covd} = \hat{\covd}_{\hat{m}}\VIF{m}.
  \end{align}
  Then, using the identities in Eqs. \eqref{viel-id} and \eqref{eps-1}, one gets
  \begin{align}
    \tfrac{\epsilon_{\hat{a}_1\cdots \hat{a}_D}}{(D-1)!}  \hvif{\hat{a}_1}\w\cdots\w\hvif{\hat{a}_{D-1}}\w\VIF{m}\ga^{\hat{a}_D}\hat{\cdf}_{\hat{m}} &= \de[V\!\!]_D\;\delta^{\hat{m}}_{\hat{a}_D}\ga^{\hat{a}_D}\hat{\cdf}_{\hat{m}}\notag\\
    &= \de[V\!\!]_D\;\fy{\hat{\cdf}}.
  \end{align}

  Therefore, the action in Eq. \eqref{D-action} is in fact the usual Dirac action written in the language of differential forms.
\end{proof}

\section{\label{sec:Fierz}FIERZ IDENTITIES}

Throughout the paper, the spacetime dimension has been set to be either five or four. Since irreducible representations of the Clifford algebra on both have the same dimension, the Fierz identities are equal, and coincide with the ones stated in the usual text books on quantum field theory, such as Ref.~\cite{Langacker}.

Below, the identities used in Sec.~\ref{sec:FC} are shown without proof:
\begin{widetext}
  \begin{align}
    \(\bps_{1-}\gam{a}\psi_{2-}\)\(\bps_{3-}\ga_{a}\psi_{4-}\)  &=    \(\bps_{1-}\gam{a}\psi_{4-}\)\(\bps_{3-}\ga_{a}\psi_{2-}\),\\
    \(\bps_{1+}\gam{a}\psi_{2+}\)\(\bps_{3+}\ga_{a}\psi_{4+}\)  &=    \(\bps_{1+}\gam{a}\psi_{4+}\)\(\bps_{3+}\ga_{a}\psi_{2+}\),\\
    \(\bps_{1+}\gam{a}\psi_{2+}\)\(\bps_{3-}\ga_{a}\psi_{4-}\)  &= -2 \(\bps_{1+}\psi_{4-}\)\(\bps_{3-}\psi_{2+}\),\\
    \(\bps_{1+}\psi_{2-}\)\(\bps_{3+}\psi_{4-}\)                &= -\frac{1}{2} \(\bps_{1+}\psi_{4-}\)\(\bps_{3+}\psi_{2-}\) +\frac{1}{8} \(\bps_{1+}\gam{ab}\psi_{4-}\)\(\bps_{3+}\ga_{ab}\psi_{2-}\),\notag\\
    \(\bps_{1+}\gam{ab}\psi_{2-}\)\(\bps_{3-}\ga_{ab}\psi_{4+}\) &= 0.
  \end{align}
\end{widetext}


\begin{thebibliography}{10}

\bibitem{Aaltonen:2012qt}
T.~Aaltonen et~al.
\newblock {Evidence for a particle produced in association with weak bosons and
  decaying to a bottom-antibottom quark pair in Higgs boson searches at the
  Tevatron}.
\newblock {\em Phys.Rev.Lett.}, 109:071804, 2012.

\bibitem{Aad:2012tfa}
Georges Aad et~al.
\newblock {Observation of a new particle in the search for the Standard Model
  Higgs boson with the ATLAS detector at the LHC}.
\newblock {\em Phys.Lett.}, B716:1--29, 2012.

\bibitem{Chatrchyan:2012ufa}
Serguei Chatrchyan et~al.
\newblock {Observation of a new boson at a mass of 125 GeV with the CMS
  experiment at the LHC}.
\newblock {\em Phys.Lett.}, B716:30--61, 2012.

\bibitem{PDG}
J.~Beringer et~al. (Particle Data~Group).
\newblock {Review of Particle Physics}.
\newblock {\em Phys. Rev.}, D86:010001, 2012.

\bibitem{PhysRevD.41.1647}
William~A. Bardeen, Christopher~T. Hill, and Manfred Lindner.
\newblock Minimal dynamical symmetry breaking of the standard model.
\newblock {\em Phys. Rev. D}, 41:1647--1660, Mar 1990.

\bibitem{Hill:2002ap}
Christopher~T. Hill and Elizabeth~H. Simmons.
\newblock {Strong dynamics and electroweak symmetry breaking}.
\newblock {\em Phys.Rept.}, 381:235--402, 2003.

\bibitem{Hirn:2007we}
J.~Hirn, Adam Martin, and V.~Sanz.
\newblock {Benchmarks for new strong interactions at the LHC}.
\newblock {\em JHEP}, 0805:084, 2008.

\bibitem{Hirn:2008tc}
Johannes Hirn, Adam Martin, and Veronica Sanz.
\newblock {Describing viable technicolor scenarios}.
\newblock {\em Phys.Rev.}, D78:075026, 2008.

\bibitem{Belyaev:2008yj}
Alexander Belyaev, Roshan Foadi, Mads~T. Frandsen, Matti Jarvinen, Fracesco
  Sannino, et~al.
\newblock {Technicolor Walks at the LHC}.
\newblock {\em Phys.Rev.}, D79:035006, 2009.

\bibitem{Quigg:2009xr}
Chris Quigg and Robert Shrock.
\newblock {Gedanken Worlds without Higgs: QCD-Induced Electroweak Symmetry
  Breaking}.
\newblock {\em Phys.Rev.}, D79:096002, 2009.

\bibitem{Andersen:2011yj}
J.R. Andersen, O.~Antipin, G.~Azuelos, L.~Del~Debbio, E.~Del~Nobile, et~al.
\newblock {Discovering Technicolor}.
\newblock {\em Eur.Phys.J.Plus}, 126:81, 2011.

\bibitem{PhysRevD.76.055005}
Roshan Foadi, Mads~T. Frandsen, Thomas~A. Ryttov, and Francesco Sannino.
\newblock Minimal walking technicolor: Setup for collider physics.
\newblock {\em Phys. Rev. D}, 76:055005, Sep 2007.

\bibitem{PhysRevD.78.115010}
Thomas~A. Ryttov and Francesco Sannino.
\newblock Ultraminimal technicolor and its dark matter technicolor interacting
  massive particles.
\newblock {\em Phys. Rev. D}, 78:115010, Dec 2008.

\bibitem{Sannino:2009za}
Francesco Sannino.
\newblock {Conformal Dynamics for TeV Physics and Cosmology}.
\newblock {\em Acta Phys.Polon.}, B40:3533--3743, 2009.

\bibitem{ArkaniHamed:2002qx}
N.~Arkani-Hamed, A.G. Cohen, E.~Katz, A.E. Nelson, T.~Gregoire, et~al.
\newblock {The Minimal moose for a little Higgs}.
\newblock {\em JHEP}, 0208:021, 2002.

\bibitem{ArkaniHamed:2002qy}
N.~Arkani-Hamed, A.G. Cohen, E.~Katz, and A.E. Nelson.
\newblock {The Littlest Higgs}.
\newblock {\em JHEP}, 0207:034, 2002.

\bibitem{PhysRevD.69.075002}
Maxim Perelstein, Michael~E. Peskin, and Aaron Pierce.
\newblock Top quarks and electroweak symmetry breaking in little higgs models.
\newblock {\em Phys. Rev. D}, 69:075002, Apr 2004.

\bibitem{Perelstein2007247}
Maxim Perelstein.
\newblock Little higgs models and their phenomenology.
\newblock {\em Progress in Particle and Nuclear Physics}, 58(1):247 -- 291,
  2007.

\bibitem{Schmaltz:2005ky}
Martin Schmaltz and David Tucker-Smith.
\newblock {Little Higgs review}.
\newblock {\em Ann.Rev.Nucl.Part.Sci.}, 55:229--270, 2005.

\bibitem{Dimopoulos1982206}
Savas Dimopoulos and John Preskill.
\newblock Massless composites with massive constituents.
\newblock {\em Nuclear Physics B}, 199(2):206 -- 222, 1982.

\bibitem{Kaplan1984183}
David~B. Kaplan and Howard Georgi.
\newblock Su(2) × u(1) breaking by vacuum misalignment.
\newblock {\em Physics Letters B}, 136(3):183 -- 186, 1984.

\bibitem{Banks1984125}
Thomas Banks.
\newblock Constraints on su2×u1 breaking by vacuum misalignment.
\newblock {\em Nuclear Physics B}, 243(1):125 -- 130, 1984.

\bibitem{Kaplan1984187}
David~B. Kaplan, Howard Georgi, and Savas Dimopoulos.
\newblock Composite higgs scalars.
\newblock {\em Physics Letters B}, 136(3):187 -- 190, 1984.

\bibitem{Georgi1984152}
Howard Georgi, David~B. Kaplan, and Peter Galison.
\newblock Calculation of the composite higgs mass.
\newblock {\em Physics Letters B}, 143(1–3):152 -- 154, 1984.

\bibitem{Georgi1984216}
Howard Georgi and David~B. Kaplan.
\newblock Composite higgs and custodial su(2).
\newblock {\em Physics Letters B}, 145(3–4):216 -- 220, 1984.

\bibitem{Dugan1985299}
Michael~J. Dugan, Howard Georgi, and David~B. Kaplan.
\newblock Anatomy of a composite higgs model.
\newblock {\em Nuclear Physics B}, 254(0):299 -- 326, 1985.

\bibitem{Giudice:2007fh}
G.F. Giudice, C.~Grojean, A.~Pomarol, and R.~Rattazzi.
\newblock {The Strongly-Interacting Light Higgs}.
\newblock {\em JHEP}, 0706:045, 2007.

\bibitem{Csaki:2008zd}
Csaba Csaki, Adam Falkowski, and Andreas Weiler.
\newblock {The Flavor of the Composite Pseudo-Goldstone Higgs}.
\newblock {\em JHEP}, 0809:008, 2008.

\bibitem{Contino:2010rs}
Roberto Contino.
\newblock {The Higgs as a Composite Nambu-Goldstone Boson}.
\newblock 2010.
\newblock \texttt{eprint arXiv:1005.4269}.

\bibitem{Barbieri:2012tu}
Riccardo Barbieri, Dario Buttazzo, Filippo Sala, David~M. Straub, and Andrea
  Tesi.
\newblock {A 125 GeV composite Higgs boson versus flavour and electroweak
  precision tests}.
\newblock {\em JHEP}, 1305:069, 2013.

\bibitem{Sakai2013429}
Kazumitsu Sakai.
\newblock Multiple schramm–loewner evolutions for conformal field theories
  with lie algebra symmetries.
\newblock {\em Nuclear Physics B}, 867(2):429 -- 447, 2013.

\bibitem{KerenZur2013394}
Boaz Keren-Zur, Paolo Lodone, Marco Nardecchia, Duccio Pappadopulo, Riccardo
  Rattazzi, and Luca Vecchi.
\newblock On partial compositeness and the \{CP\} asymmetry in charm decays.
\newblock {\em Nuclear Physics B}, 867(2):394 -- 428, 2013.

\bibitem{Burdman:2007sx}
Gustavo Burdman and Leandro Da~Rold.
\newblock {Electroweak Symmetry Breaking from a Holographic Fourth Generation}.
\newblock {\em JHEP}, 0712:086, 2007.

\bibitem{CarcamoHernandez:2012xy}
A.E. Carcamo~Hernandez, Claudio~O. Dib, H.~Nicolas Neill, and Alfonso~R.
  Zerwekh.
\newblock {Quark masses and mixings in the RS1 model with a condensing 4th
  generation}.
\newblock {\em JHEP}, 1202:132, 2012.

\bibitem{Chang:2000yw}
Lay~Nam Chang, Oleg Lebedev, Will Loinaz, and Tatsu Takeuchi.
\newblock {Universal torsion induced interaction from large extra dimensions}.
\newblock {\em Phys.Rev.Lett.}, 85:3765--3768, 2000.

\bibitem{Kostelecky:2007kx}
V.~Alan Kostelecky, Neil Russell, and Jay Tasson.
\newblock {New Constraints on Torsion from Lorentz Violation}.
\newblock {\em Phys.Rev.Lett.}, 100:111102, 2008.

\bibitem{Zubkov:2010sx}
M.A. Zubkov.
\newblock {Torsion instead of Technicolor}.
\newblock {\em Mod.Phys.Lett.}, A25:2885--2898, 2010.

\bibitem{RevModPhys.48.393}
Friedrich~W. Hehl, Paul von~der Heyde, G.~David Kerlick, and James~M. Nester.
\newblock General relativity with spin and torsion: Foundations and prospects.
\newblock {\em Rev. Mod. Phys.}, 48:393--416, Jul 1976.

\bibitem{Belyaev:1998ax}
A.S. Belyaev and I.L. Shapiro.
\newblock {Torsion action and its possible observables}.
\newblock {\em Nucl.Phys.}, B543:20--46, 1999.

\bibitem{Shapiro:2001rz}
I.~L. Shapiro.
\newblock {Physical aspects of the space-time torsion}.
\newblock {\em Phys.Rept.}, 357:113, 2002.

\bibitem{PhysRevD.75.034014}
A.~S. Belyaev, I.~L. Shapiro, and M.~A.~B. do~Vale.
\newblock Torsion phenomenology at the cern lhc.
\newblock {\em Phys. Rev. D}, 75:034014, Feb 2007.

\bibitem{Note1}
There are other constructions of torsional gravity coupled with fermions. The
  interested reader is encouraged to review Ref.~\cite
  {Fabbri:2011kq,Fabbri:2012yg}.

\bibitem{CCSZ}
Oscar Castillo-Felisola, Cristobal Corral, Ivan Schmidt, and Alfonso Zerwekh.
\newblock {Phenomenological Constraints to Dimensionality of the Spacetime with
  Torsion}.
\newblock 2012.
\newblock {\texttt{arXiv:1211.4359}}.

\bibitem{Note2}
Notice that under interchange of chirality ($+\leftrightarrow -$) the
  quantities change as $a_m\leftrightarrow b_m$ and $c_m\leftrightarrow c_m^*$.

\bibitem{Burdman:2009ih}
Gustavo Burdman, Leandro Da~Rold, and Ricardo~D'Elia Matheus.
\newblock {The Lepton Sector of a Fourth Generation}.
\newblock {\em Phys.Rev.}, D82:055015, 2010.

\bibitem{Note3}
These general aspects of the four-fermion interaction are been considered and
  will be developed in a future manuscript~\cite {OCF-future}.

\bibitem{Aad:2011aj}
Georges Aad et~al.
\newblock {Search for New Physics in Dijet Mass and Angular Distributions in pp
  Collisions at $\sqrt{s} = 7$ TeV Measured with the ATLAS Detector}.
\newblock {\em New J.Phys.}, 13:053044, 2011.

\bibitem{ATLAS:2012pu}
Georges Aad et~al.
\newblock {ATLAS search for new phenomena in dijet mass and angular
  distributions using $pp$ collisions at $\sqrt{s}=7$ TeV}.
\newblock {\em JHEP}, 1301:029, 2013.

\bibitem{Aad:2012bsa}
Georges Aad et~al.
\newblock {Search for contact interactions and large extra dimensions in
  dilepton events from $pp$ collisions at $\sqrt{s}=7$ TeV with the ATLAS
  detector}.
\newblock {\em Phys.Rev.}, D87:015010, 2013.

\bibitem{Langacker}
Paul Langacker.
\newblock {\em The standard model and beyond}.
\newblock Series in High Energy Physics, Cosmology and Gravitation. Taylor and
  Francis, 2010.

\bibitem{Fabbri:2011kq}
Luca Fabbri.
\newblock {On Geometrically Unified Fields and Universal Constants}.
\newblock {\em Gen.Rel.Grav.}, 45:1285--1296, 2013.

\bibitem{Fabbri:2012yg}
Luca Fabbri and Stefano Vignolo.
\newblock {A modified theory of gravity with torsion and its applications to
  cosmology and particle physics}.
\newblock {\em Int.J.Theor.Phys.}, 51:3186--3207, 2012.

\bibitem{OCF-future}
Oscar Castillo-Felisola, Cristobal Corral, and Alfonso~R. Zerwekh.
\newblock Phenomenological aspects of general four-fermion contact interaction
  arised from torsion.
\newblock (unpublish).

\end{thebibliography}

\end{document}